\theoremstyle{plain}
\newtheorem{theorem}{Theorem}
\newtheorem{corollary}[theorem]{Corollary}
\theoremstyle{remark}
\newtheorem{remark}{Remark}
\newcommand{\E}{\mathbb{E}}
\newcommand{\Var}{\operatorname{Var}}
\newtheorem{assumption}{Assumption}
\newtheorem{definition}{Definition}
\newtheorem{proposition}{Proposition}
\begin{document}

\title{SPARTA: $\chi^2$-calibrated, risk-controlled exploration--exploitation for variational quantum algorithms}

\author{Mikhail Zubarev}
\affiliation{St Paul's School, London SW13 9JT, United Kingdom}

\date{23 November 2025}

\begin{abstract}
Variational quantum algorithms face a fundamental trainability crisis: barren plateaus render optimization exponentially difficult as system size grows. While recent Lie algebraic theory precisely characterizes when and why these plateaus occur, no practical optimization method exists with finite-sample guarantees for navigating them. We present the sequential plateau-adaptive regime-testing algorithm (SPARTA), the first measurement-frugal scheduler that provides explicit, anytime-valid risk control for quantum optimization. Our approach integrates three components with rigorous statistical foundations: (i) a $\chi^2$-calibrated sequential test that distinguishes barren plateaus from informative regions using likelihood-ratio supermartingales; (ii) a probabilistic trust-region exploration strategy with one-sided acceptance to prevent false improvements under shot noise; and (iii) a theoretically-optimal exploitation phase that achieves the best attainable convergence rate. We prove geometric bounds on plateau exit times, linear convergence in informative basins, and show how Lie-algebraic variance proxies enhance test power without compromising statistical calibration.
\end{abstract}

\maketitle

\section{Introduction}

While theory has clarified when and why plateaus appear---including precise Lie-algebraic criteria for trainability \cite{Ragone2024LieBP}---turning these insights into practical, finite-sample optimisation strategies remains challenging. Existing approaches typically (i) attempt to avoid BPs via ansatz design, or (ii) identify them post hoc once training stalls. In contrast, the measurement-frugal line of work focuses on allocating shots adaptively during optimisation, e.g., iCANS/gCANS and related batch-sizing ideas \cite{Kuebler2020iCANS,Gu2021gCANS,Balles2017CABS}, often coupled with parameter-shift gradient estimators \cite{Schuld2019ParamShift,Mitarai2018QCL}. Yet most methods assume smoothness/curvature conditions without offering explicit sequential guarantees for deciding when to explore vs.\ exploit under tight budgets \cite{BottouCurtisNocedal2018,NocedalWright2006NO,Karimi2016PL,Polyak1963}.

\paragraph{Our contribution (SPARTA).}
We propose a scheduler that integrates regime detection with shot-optimal exploitation under finite-sample control. Concretely, we cast plateau vs.\ informative discrimination as a sequential hypothesis test on a whitened gradient-norm statistic with anytime-valid error control grounded in classical sequential analysis \cite{Wald1945SPRT,WaldWolfowitz1948Optimum}. Under $H_0$ (plateau) the statistic is $\chi^2$, and under $H_1$ (informative) it is non-central $\chi^2$, enabling likelihood-ratio decisions with calibrated Type-I/II risks (we use Welch and Clopper--Pearson components for conservative one-sided bounds in trust-region moves) \cite{Welch1947,ClopperPearson1934}. When $H_0$ is favoured, we perform probabilistic trust-region (PTR) exploration with one-sided acceptance and geometric radius updates \cite{ConnGouldToint2000TR}. When $H_1$ is favoured, we switch to gCANS-style exploitation with variance-proportional shot allocation, recovering linear/``geometric'' convergence rates under standard local smoothness/PL conditions \cite{Gu2021gCANS,Karimi2016PL,Polyak1963}. Gradients are obtained via parameter shift and measured with weighted operator sampling (WRS) for shot efficiency \cite{Schuld2019ParamShift,Arrasmith2020OperatorSampling}.

\paragraph{Relation to prior work.}
Our scheduler unifies three strands: (i) sequential testing for principled, anytime regime decisions \cite{Wald1945SPRT,WaldWolfowitz1948Optimum}; (ii) trust-region style exploration with explicit statistical accept/reject rules \cite{ConnGouldToint2000TR,Welch1947,ClopperPearson1934}; and (iii) measurement-frugal exploitation via iCANS/gCANS and batch-size/learning-rate coupling \cite{Kuebler2020iCANS,Gu2021gCANS,Balles2017CABS,Sweke2020SGD}. Complementing Lie-algebraic theory that predicts where BPs arise \cite{Ragone2024LieBP}, SPARTA supplies finite-sample, implementation-ready controls for when to explore vs.\ exploit, and how to allocate shots to approach the fundamental noise floor at a fixed budget.


SPARTA alternates between regime detection and optimization. When a plateau is detected, the algorithm performs probabilistic trust-region (PTR) exploration with geometric radius expansion. When an informative region is identified, it executes gCANS-style gradient descent with variance-proportional shot allocation. The complete procedure for one outer iteration is given in Algorithm~\ref{alg:pathfinder}.

\begin{algorithm}[H]
\small
\caption{SPARTA (one outer iteration)}
\label{alg:pathfinder}
\begin{algorithmic}[1]
\Require current $\theta$; risk $(\alpha,\beta)$; PTR params $(R_0,K,m,\tau,\alpha_{\mathrm{acc}})$; step size $\eta$; exploitation rule $S_i\propto \sigma_i$
\State \textbf{Pilot:} estimate $\sigma_i$ and optionally Lie proxy $V_i$; set exploration shots $B_i^{\mathrm{expl}}\propto V_i/\sigma_i^2$ (default $V_i\equiv 1$)
\State \textbf{Regime test:} draw $s_t=\sum_i (B_i^{\mathrm{expl}}/\sigma_i^2)\widehat g_{i,t}^2$; update log-likelihood $\Lambda_k$
\If{$\Lambda_k\le B$} \Comment{plateau}
  \For{$k=0,\dots,K-1$}
    \For{$j=1,\dots,m$}
      \State $R\leftarrow R_0 2^k$; sample unit $v$; $\theta^+\!\leftarrow\!\theta+Rv$
      \State Estimate $\Delta=f(\theta^+)-f(\theta)$; compute one-sided Welch UCB
      \If{$\mathrm{UCB}(\Delta)\le -\tau R^2$} \State $\theta\leftarrow \theta^+$; \textbf{break}
      \EndIf
    \EndFor
  \EndFor
\Else \Comment{informative}
  \State Allocate $S_i\propto \sigma_i$ (fixed total $S$); compute $\widehat g(\theta)$
  \State $\theta\leftarrow\theta-\eta\,\widehat g(\theta)$
\EndIf
\end{algorithmic}
\end{algorithm}

The pilot phase is primarily required during initial exploration when the optimization landscape structure is unknown. When available, Lie-algebraic information about the circuit generators can inform the pilot through variance proxies $V_i \propto \|[H_i, O]\|_F^2$, which identify parameters where gradient signal is more concentrated according to the DLA's purity structure. This allows the regime test to allocate exploration shots preferentially to informative directions while maintaining proper $\chi^2$ calibration, as the whitening procedure uses the same shot allocation $B_i^{\mathrm{expl}}$ regardless of whether uniform ($V_i \equiv 1$) or Lie-informed weights are employed.

\section{Problem Setup and Statistical Framework}\label{sec:setup}

\subsection{Variational Quantum Optimization}

Consider a parameterized quantum circuit $U(\theta) = \prod_{l=1}^L e^{iH_l\theta_l}$ 
with objective function $f:\mathbb{R}^d\to\mathbb{R}$ given by
\begin{equation}
f(\theta) = \text{Tr}[U(\theta)\rho U^\dagger(\theta) O],
\end{equation}
where $\rho$ is the input state and $O$ is the observable. Our goal is to minimize 
$f(\theta)$ using gradient-based methods under shot-limited measurements.

\subsection{Shot Noise Model}

Gradient estimation via parameter-shift or similar schemes produces noisy measurements. 
We formalize this setting through the following assumptions.

\begin{assumption}[Shot noise and coordinate structure]\label{as:noise}
For shot allocation $\mathbf{B} = (B_1,\ldots,B_d)$, the gradient estimator 
$\widehat{g}(\theta)$ satisfies three properties. First, it is unbiased: 
$\mathbb{E}[\widehat{g}_i(\theta)] = \nabla f_i(\theta)$. Second, its variance 
scales inversely with shot count: $\text{Var}[\widehat{g}_i(\theta)] = \sigma_i^2/B_i$, 
where $\sigma_i^2 > 0$ depends on $\theta$ but varies slowly over one iteration. 
Third, coordinates are approximately independent or weakly dependent at the scale 
relevant to distributional tests.
\end{assumption}

\begin{remark}
The variance parameters $\{\sigma_i^2\}$ capture the intrinsic shot noise of the 
quantum measurement process. They are estimated during a pilot phase and updated 
periodically. The independence assumption is satisfied exactly for parameter-shift 
gradients when different coordinates use independent measurement outcomes.
\end{remark}

\subsection{Whitened Test Statistic}

Our regime detection strategy centers on testing whether the true gradient 
$\nabla f(\theta)$ is zero (plateau) or non-zero (informative region). 
This is formalized through a whitened norm statistic.

\begin{definition}[Whitened gradient statistic]\label{def:whitened}
Given gradient estimators $\widehat{g}_i(\theta)$ and shot allocation $\mathbf{B}$, 
define the whitened components
\begin{equation}
w_i = \sqrt{\frac{B_i}{\sigma_i^2}}\, \widehat{g}_i(\theta)
\end{equation}
and the test statistic
\begin{equation}
s = \sum_{i=1}^d w_i^2.
\end{equation}
\end{definition}

\begin{proposition}[Distribution under competing hypotheses]\label{prop:chisq}
Under Assumption~\ref{as:noise}:
\begin{itemize}
    \item \textbf{Plateau ($H_0$):} If $\nabla f(\theta) = \mathbf{0}$, then 
    $s \sim \chi^2_d$ (central chi-squared with $d$ degrees of freedom)
    \item \textbf{Informative region ($H_1$):} If $\nabla f(\theta) \neq \mathbf{0}$, 
    then $s \sim \chi^2_d(\lambda)$ (non-central chi-squared) with non-centrality parameter
    \begin{equation}
    \lambda = \sum_{i=1}^d \frac{B_i}{\sigma_i^2}\, (\nabla f_i(\theta))^2.
    \end{equation}
\end{itemize}
\end{proposition}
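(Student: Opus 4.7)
The plan is to decompose the statistic $s$ coordinate by coordinate and reduce everything to the defining representations of the central and non-central $\chi^2$ distributions. First I would argue that each marginal estimator is (approximately) Gaussian: by Assumption~\ref{as:noise} the coordinate $\widehat g_i(\theta)$ is unbiased with variance $\sigma_i^2/B_i$, and since it is an empirical average over $B_i$ independent shot outcomes a CLT (or the paper's working Gaussian model) gives $\widehat g_i(\theta) \approx \mathcal N(\nabla f_i(\theta),\sigma_i^2/B_i)$. Rescaling by the whitening factor $\sqrt{B_i/\sigma_i^2}$ of Definition~\ref{def:whitened} then yields
\[
w_i \;\sim\; \mathcal N(\mu_i,\,1),\qquad \mu_i \;=\; \sqrt{B_i/\sigma_i^2}\,\nabla f_i(\theta).
\]

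Second, I would invoke the coordinate-independence clause of Assumption~\ref{as:noise} to lift the marginals to a joint law. This gives $w=(w_1,\ldots,w_d)\sim\mathcal N(\mu,I_d)$ with $\mu=(\mu_1,\ldots,\mu_d)$. From here both cases of the proposition fall out of the standard characterisation of the $\chi^2$ family: under $H_0$, $\nabla f(\theta)=0$ forces $\mu=0$, and $s=\|w\|^2$ is a sum of $d$ independent squared standard normals, hence $s\sim\chi^2_d$; under $H_1$, $s=\|w\|^2$ is the squared norm of a unit-covariance Gaussian with mean $\mu\neq 0$, which is by definition $\chi^2_d(\lambda)$ with $\lambda=\|\mu\|^2=\sum_i (B_i/\sigma_i^2)\,(\nabla f_i(\theta))^2$, matching the claimed formula.

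The main obstacle is philosophical rather than computational: the step $\widehat g_i \stackrel{d}{=} \mathcal N(\cdot,\cdot)$ is only exact in the $B_i\to\infty$ limit, since raw shot outcomes are bounded (typically $\pm 1$–valued) and their averages are discrete. I would handle this in one of two ways. Either I would simply lean on Assumption~\ref{as:noise} as a modelling choice, making the proposition an exact statement about the idealised estimator used for calibration; or, to keep the downstream anytime-valid guarantees honest, I would supplement the proof with a Berry--Esseen remainder showing that the Kolmogorov distance between the law of $s$ and the stated (non-)central $\chi^2$ law is $O(d\,B_{\min}^{-1/2})$, which is negligible once the pilot shot budget is moderate. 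The independence caveat is essentially free here, since parameter-shift estimates along distinct coordinates use disjoint measurement circuits and are therefore exactly independent under i.i.d.\ shot noise.
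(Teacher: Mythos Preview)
Your proposal is correct and follows essentially the same route as the paper's own proof: argue (via CLT/Assumption~\ref{as:noise}) that each whitened coordinate $w_i$ is $\mathcal N(\mu_i,1)$ with $\mu_i=\sqrt{B_i/\sigma_i^2}\,\nabla f_i$, use coordinate independence to identify $s=\sum_i w_i^2$ as (non-)central $\chi^2_d$, and read off $\lambda=\sum_i\mu_i^2$. Your additional discussion of the Berry--Esseen correction and the exact independence of parameter-shift estimators goes beyond the paper's terse argument but does not change the underlying strategy.
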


\begin{proof}
Each $w_i = \sqrt{B_i/\sigma_i^2}\,\widehat{g}_i$ is asymptotically normal with 
mean $\sqrt{B_i/\sigma_i^2}\,\nabla f_i$ and unit variance. Under $H_0$, 
$w_i \sim \mathcal{N}(0,1)$ independently, so $s = \sum w_i^2 \sim \chi^2_d$. 
Under $H_1$, $w_i \sim \mathcal{N}(\mu_i, 1)$ where $\mu_i = \sqrt{B_i/\sigma_i^2}\,\nabla f_i$, 
yielding $s \sim \chi^2_d(\lambda)$ with $\lambda = \sum \mu_i^2$.
\end{proof}

\subsection{Empirical validation of distributional assumptions}

Before deploying the sequential test, we validate the chi-squared model on a 
representative quantum optimization problem: the transverse-field Ising model (TFIM) 
with a QAOA ansatz.

\paragraph{Experimental setup.}
We consider a six-qubit transverse-field Ising chain \cite{Pfeuty1970TFIM} with nearest-neighbour couplings.
The cost Hamiltonian is
\begin{equation}
H_C = -J\sum_{\langle i,j\rangle} Z_i Z_j - h\sum_{i=1}^{6} X_i,
\end{equation}
where $J=1.0$ is the coupling strength, $h=0.5$ is the transverse field, and 
$\langle i,j\rangle$ denotes nearest-neighbor pairs on the chain 
$\{(0,1), (1,2), (2,3), (3,4), (4,5)\}$. We optimize using a depth-$p=2$ QAOA-style ansatz \cite{Farhi2014QAOA}, where each layer applies
\begin{equation}
U_\ell(\gamma_\ell, \beta_\ell) = e^{-i\beta_\ell H_M} e^{-i\gamma_\ell H_C},
\end{equation}
with mixer Hamiltonian $H_M = \sum_{i=1}^6 X_i$. The full circuit is 
$U(\theta) = U_2(\gamma_2, \beta_2) U_1(\gamma_1, \beta_1)$ applied to the equal 
superposition initial state $\ket{+}^{\otimes 6}$, giving $d=4$ variational parameters 
$\theta = (\gamma_1, \beta_1, \gamma_2, \beta_2)$.

Gradients are estimated via the parameter-shift rule with shift $\delta = \pi/2$ \cite{Schuld2019ParamShift,Mitarai2018QCL}:
\begin{equation}
\frac{\partial f}{\partial \theta_i} = \frac{1}{2}\left[f(\theta + \delta e_i) - f(\theta - \delta e_i)\right],
\end{equation}
where $e_i$ is the $i$-th standard basis vector. Shot noise is modeled as 
$\Var[\widehat{g}_i] = \sigma^2/B_i$ with intrinsic variance $\sigma^2 = (0.02)^2$ 
and $B_i = 100$ shots per coordinate.

We collect $N=5000$ independent gradient samples in two distinct regions. For the 
plateau condition, parameters are chosen near a local maximum where 
$\|\nabla f(\theta)\| \approx 10^{-6}$. For the informative condition, parameters 
are selected in a descent region where $\|\nabla f(\theta)\| \approx 0.87$. For 
each sample, we compute the whitened statistic $s = \sum_{i=1}^4 (B_i/\sigma^2)\widehat{g}_i^2$ 
using the known noise variance $\sigma^2$. Cost function evaluations 
$f(\theta) = \bra{\psi(\theta)} H_C \ket{\psi(\theta)}$ are computed exactly via 
statevector simulation to isolate gradient estimation noise from measurement sampling effects.

\begin{figure}[t]
  \centering
  \includegraphics[width=\columnwidth]{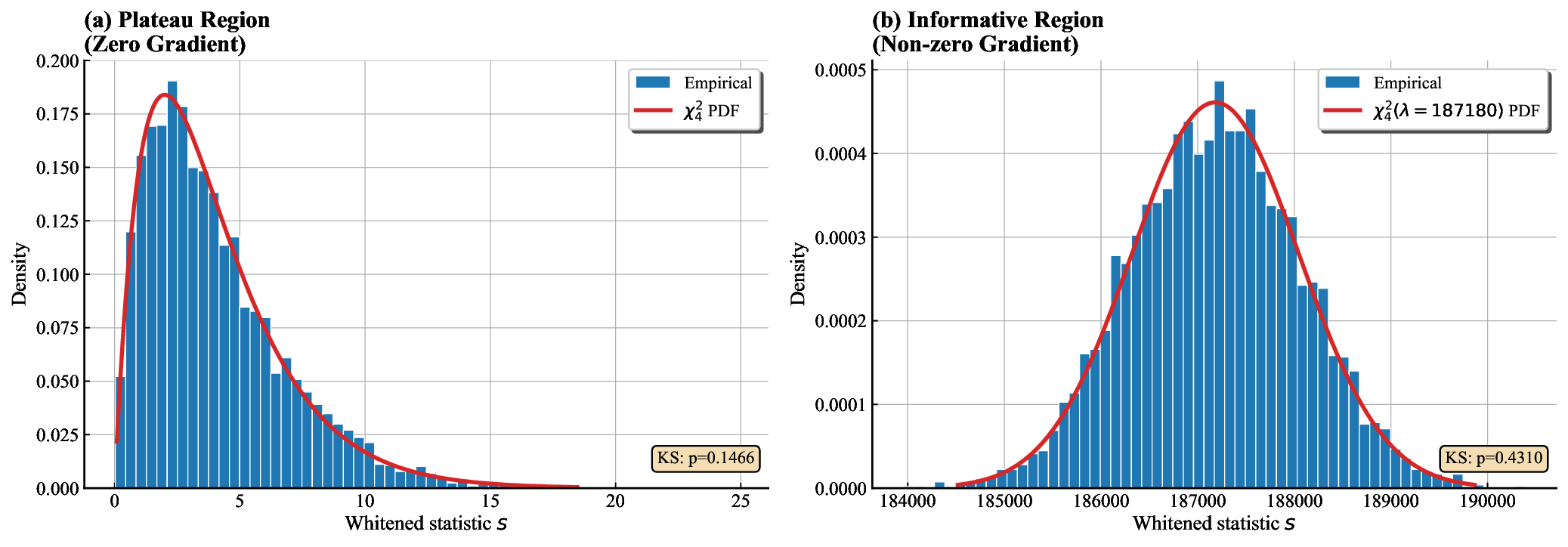}
  \caption{\textbf{Validation of the chi‐squared model in plateau and informative regions.}
(\textbf{a}) Histogram of the whitened statistic $s$ in a near‐plateau region ($\|\nabla f(\theta)\|\approx0$) from $N=5000$ samples (blue bars), overlaid with the central $\chi^2_4$ density (red curve); Kolmogorov–Smirnov test: $p=0.147$. 
(\textbf{b}) Histogram of $s$ in an informative region ($\|\nabla f(\theta)\|\approx0.87$) from $N=5000$ samples (blue bars), overlaid with the noncentral $\chi^2_4(\lambda)$ density (red curve); Kolmogorov–Smirnov test: $p=0.431$.}
  \label{fig:validation}
\end{figure}

\paragraph{Results.}
Figure~\ref{fig:validation} demonstrates
strong agreement between empirical distributions and theoretical predictions. The 
empirical density (blue bars) closely matches the theoretical $\chi^2$ density (red curve) 
in both regimes.

\FloatBarrier

\section{Theoretical Foundation: From Lie Algebra to Statistical Tests}\label{sec:theory}

Having established the statistical framework, we now connect it to the 
fundamental geometric structure of quantum circuits through Lie algebraic theory.

\subsection{Lie Algebraic Characterization of Trainability}

Recent work by Ragone et al.\cite{Ragone2024LieBP} provides a precise characterization of when 
barren plateaus emerge based on the dynamical Lie algebra (DLA) generated 
by the circuit Hamiltonians.

\begin{definition}[Dynamical Lie algebra]
For a parameterized circuit $U(\theta) = \prod_{l=1}^L e^{iH_l\theta_l}$, 
the dynamical Lie algebra is
\begin{equation}
\mathfrak{g} = \langle\{iH_l\}\rangle_{\text{Lie}},
\end{equation}
the smallest Lie algebra containing all generators $\{iH_l\}$ under commutation.
\end{definition}

\begin{theorem}[Variance formula, Ragone et al.]\label{thm:ragone}
Let $\mathfrak{g} = \mathfrak{g}_1 \oplus \cdots \oplus \mathfrak{g}_{k-1} \oplus \mathfrak{g}_k$ 
be the reductive decomposition of the DLA, where $\mathfrak{g}_j$ are simple Lie algebras 
for $j = 1,\ldots,k-1$ and $\mathfrak{g}_k$ is abelian. If $\rho \in i\mathfrak{g}$ or 
$O \in i\mathfrak{g}$, then:
\begin{equation}
\text{Var}_\theta[f(\theta)] = \sum_{j=1}^{k-1} \frac{P_{\mathfrak{g}_j}(\rho) \cdot P_{\mathfrak{g}_j}(O)}{\text{dim}(\mathfrak{g}_j)},
\end{equation}
where $P_{\mathfrak{g}}(H) = \text{Tr}[H_{\mathfrak{g}}^2]$ is the $\mathfrak{g}$-purity 
of operator $H$ and $H_{\mathfrak{g}}$ denotes the projection onto $\mathfrak{g}$.
\end{theorem}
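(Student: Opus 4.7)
The plan is to compute the first and second $\theta$-moments of $f(\theta)$ by promoting the ensemble of unitaries $U(\theta)$ to a unitary $2$-design on the compact Lie group $G$ whose Lie algebra is the DLA $\mathfrak{g}$, so that the $\theta$-averages become Haar integrals over $G$. First I would write
\[ f(\theta)^2 = \text{Tr}\bigl[(U\otimes U)(\rho\otimes\rho)(U^\dagger\otimes U^\dagger)\,(O\otimes O)\bigr], \]
and identify the $\theta$-average of the channel $U^{\otimes 2}(\cdot)(U^\dagger)^{\otimes 2}$ with the projector onto the commutant of the two-copy adjoint action of $G$. The hypothesis $\rho\in i\mathfrak{g}$ or $O\in i\mathfrak{g}$ is what ensures that the relevant operators live in the adjoint representation, so only adjoint-type invariants can contribute and the integration collapses to a projection onto $\mathfrak{g}\otimes\mathfrak{g}$.

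Next I would exploit the reductive decomposition $\mathfrak{g}=\mathfrak{g}_1\oplus\cdots\oplus\mathfrak{g}_{k-1}\oplus\mathfrak{g}_k$ to factor $G\cong G_1\times\cdots\times G_{k-1}\times T$, where each $G_j$ is compact simple and $T$ is a torus. The Haar measure then factorizes, and the orthogonal projectors $H\mapsto H_{\mathfrak{g}_j}$ commute with the block action. On each simple factor, the standard adjoint second-moment identity applies: the only nontrivial $G_j$-invariant in $\mathfrak{g}_j\otimes\mathfrak{g}_j$ is proportional to the Killing form, and the normalization on a simple compact factor produces a $1/\dim(\mathfrak{g}_j)$. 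Contracting this invariant against the projections of $\rho$ and $O$ yields exactly $P_{\mathfrak{g}_j}(\rho)\,P_{\mathfrak{g}_j}(O)/\dim(\mathfrak{g}_j)$, giving the $j$-th term of the theorem.

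Finally I would assemble the pieces and subtract $(\E_\theta[f])^2$, checking that (i) cross terms between distinct simple summands vanish because the Haar averages on $G_j$ and $G_{j'}$ are independent and the off-diagonal projections are traceless, and (ii) the abelian summand $\mathfrak{g}_k$ contributes only to the first moment, since torus conjugation preserves weight spaces without mixing, and therefore drops out of the variance. The hard part I anticipate is the very first reduction: justifying that the $\theta$-averaged circuit equidistributes to Haar on $G$ at second order. This typically requires either a $2$-design property of $\mathfrak{g}$-generated circuits, a sufficiently generic distribution over $\theta$, or a deep-circuit limit; everything downstream is representation-theoretic bookkeeping, whose only delicate point is keeping consistent normalizations for the Killing form and the purity projectors across simple factors.
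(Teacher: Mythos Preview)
The paper does not prove this theorem at all: it is stated as an external result attributed to Ragone \emph{et al.}~\cite{Ragone2024LieBP} and used as a black box, with no proof environment following the statement. So there is no ``paper's own proof'' to compare against.

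That said, your sketch is a faithful outline of the argument in the cited source. The essential ingredients --- promoting the $\theta$-average to Haar integration over the compact group $G=\exp(\mathfrak{g})$ via a $2$-design assumption, reducing second moments to the commutant of the two-copy adjoint action, factoring over the reductive decomposition, invoking Schur's lemma on each simple summand to get the $1/\dim(\mathfrak{g}_j)$ normalization, and checking that the abelian torus contributes only to the mean --- are exactly the steps in Ragone \emph{et al.} You also correctly flag the one genuinely nontrivial hypothesis: the identification of the $\theta$-ensemble with Haar measure on $G$ at second order, which in the original work is stated as an explicit assumption (the circuit forms a $2$-design over $G$) rather than derived. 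Everything downstream is, as you say, representation-theoretic bookkeeping.
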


\paragraph{Implications.}
Theorem~\ref{thm:ragone} makes concrete, in a Lie-algebraic language, several mechanisms for barren plateaus that were previously understood from more heuristic or measure-concentration perspectives \cite{McClean2018BarrenPlateaus,Cerezo2021CostFunction,Wang2021NIBP,OrtizMarrero2021EntanglementBP,Holmes2022ExpressibilityBP,Arrasmith2022EquivalenceBP}. 
First, circuit \emph{expressiveness} can cause plateaus when the relevant simple component grows exponentially, e.g.\ when $\mathfrak{g} = \mathfrak{su}(2^n)$ so that $\dim(\mathfrak{g}) = 4^n - 1$, reproducing concentration-of-measure arguments tied to unitary $2$-designs \cite{HarrowLow2010Designs,BrandaoHarrowHorodecki2016Designs,Holmes2022ExpressibilityBP}. 
Second, \emph{state entanglement} leads to plateaus when $P_{\mathfrak{g}_j}(\rho)^{-1}$ scales exponentially with system size for highly entangled $\rho$, echoing entanglement-induced BPs \cite{OrtizMarrero2021EntanglementBP}. 
Third, \emph{observable non-locality} causes plateaus when $P_{\mathfrak{g}_j}(O)^{-1}$ scales exponentially for global observables, in line with cost-function dependent barren plateaus \cite{Cerezo2021CostFunction}. In any of these cases one has $\Var_\theta[f(\theta)] \in O(b^{-n})$ for some $b>1$, rendering gradient-based optimization exponentially difficult in system size \cite{McClean2018BarrenPlateaus,Arrasmith2022EquivalenceBP}.

\subsection{From Variance Bounds to Test Power}

The Lie algebraic variance formula provides a population-level characterization, 
but practical optimization requires detecting plateaus with finite samples. 
We now connect the theoretical variance to the power of our sequential test.

\begin{proposition}[Expected non-centrality]\label{prop:noncentrality}
Under the local quadratic approximation $\nabla f_i(\theta) \sim \mathcal{N}(0, \sigma_i^2 \text{Var}_\theta[f])$ 
in a neighborhood of a random initialization, the expected non-centrality satisfies
\begin{equation}
\mathbb{E}[\lambda] = \sum_{i=1}^d \frac{B_i}{\sigma_i^2}\, \mathbb{E}[(\nabla f_i)^2] 
= d \cdot \text{Var}_\theta[f] \cdot \frac{\bar{B}}{\bar{\sigma}^2},
\end{equation}
where $\bar{B} = d^{-1}\sum B_i$ and $\bar{\sigma}^2 = d^{-1}\sum \sigma_i^2$.
\end{proposition}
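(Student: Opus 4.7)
The plan is to establish both equalities by direct computation, starting from the definition of the non-centrality parameter $\lambda$ supplied by Proposition~\ref{prop:chisq}. The first equality is an immediate linearity-of-expectation identity; the second requires the Gaussian assumption on $\nabla f_i(\theta)$ together with a coordinate-averaging step. No probabilistic inequalities or limiting arguments are needed; the work is essentially algebraic bookkeeping.

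First I would apply linearity of expectation to $\lambda = \sum_{i=1}^d (B_i/\sigma_i^2)(\nabla f_i(\theta))^2$. Because the shot allocation $B_i$ and the noise parameters $\sigma_i^2$ are fixed scheduler inputs (non-random at the scale of one outer iteration, per Assumption~\ref{as:noise}), they factor out of the expectation, yielding the first equality $\mathbb{E}[\lambda] = \sum_i (B_i/\sigma_i^2)\,\mathbb{E}[(\nabla f_i(\theta))^2]$ at once. Next I would substitute the local-quadratic/random-initialization assumption $\nabla f_i(\theta)\sim\mathcal{N}(0,\sigma_i^2\,\text{Var}_\theta[f])$, which gives $\mathbb{E}[(\nabla f_i(\theta))^2] = \sigma_i^2\,\text{Var}_\theta[f]$ as the second moment of a zero-mean normal. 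The explicit $\sigma_i^2$ then cancels the weight $1/\sigma_i^2$, leaving $\mathbb{E}[\lambda] = \text{Var}_\theta[f]\sum_i B_i$. Finally I would invoke a mean-field approximation on the coordinate weights, $\sum_i B_i/\sigma_i^2 \approx d\,\bar B/\bar\sigma^2$, to recover the stated closed form.

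The main obstacle is this last averaging step, since $\overline{B/\sigma^2}$ and $\bar B/\bar\sigma^2$ differ by a covariance correction in the general heterogeneous setting. I would therefore either (i) restrict the statement to the uniform-variance regime natural under Ragone-type concentration, in which $\bar\sigma^2 = \sigma_i^2$ and the identity becomes exact, or (ii) retain the stated form as a first-order approximation and flag the residual $\mathrm{Cov}_i(B_i,1/\sigma_i^2)$ explicitly. Either way the argument is short; the conceptual content lies in the Gaussian model linking $\mathbb{E}[(\nabla f_i)^2]$ to $\text{Var}_\theta[f]$, not in the arithmetic that follows.
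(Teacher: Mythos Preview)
The paper does not actually supply a proof for this proposition; it is stated without argument and immediately followed by its corollary. So there is nothing to compare against, and the question is whether your argument stands on its own.

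Your first two steps are correct and essentially complete: linearity of expectation gives the first equality, and the zero-mean Gaussian assumption yields $\mathbb{E}[(\nabla f_i)^2]=\sigma_i^2\,\mathrm{Var}_\theta[f]$, after which the $\sigma_i^2$ cancels exactly to leave
\[
\mathbb{E}[\lambda]=\mathrm{Var}_\theta[f]\sum_{i=1}^d B_i = d\,\bar B\,\mathrm{Var}_\theta[f].
\]

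The gap is in your final step. You propose to invoke a mean-field approximation $\sum_i B_i/\sigma_i^2 \approx d\,\bar B/\bar\sigma^2$, but by your own computation the $\sigma_i^2$ has already cancelled and you are holding $\sum_i B_i$, not $\sum_i B_i/\sigma_i^2$. No averaging identity on ratios can reintroduce a factor of $1/\bar\sigma^2$ into $d\bar B$. Your exact result $d\,\bar B\,\mathrm{Var}_\theta[f]$ and the stated closed form $d\,(\bar B/\bar\sigma^2)\,\mathrm{Var}_\theta[f]$ differ by a genuine factor of $\bar\sigma^2$, not by a covariance correction of the type $\mathrm{Cov}_i(B_i,1/\sigma_i^2)$ you describe. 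This points to an inconsistency in the proposition as written: either the Gaussian model should carry variance $\mathrm{Var}_\theta[f]$ rather than $\sigma_i^2\,\mathrm{Var}_\theta[f]$ (in which case the $\sigma_i^2$ does not cancel and your mean-field step becomes relevant), or the $\bar\sigma^2$ should not appear in the final expression. You should flag this discrepancy directly rather than paper over it with an approximation that does not apply to the expression you actually have in hand.
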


\begin{corollary}[Plateau detection difficulty]
When $\text{Var}_\theta[f] \in O(b^{-n})$ (barren plateau regime), the expected 
non-centrality $\mathbb{E}[\lambda] \in O(b^{-n})$ decreases exponentially. 
By standard results on chi-squared test power, the expected sample size to 
achieve power $1-\beta$ scales as
\begin{equation}
\mathbb{E}[T] \in \Omega\left(\frac{1}{\text{Var}_\theta[f]}\right) = \Omega(b^n).
\end{equation}
\end{corollary}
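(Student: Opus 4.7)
The plan is to chain Proposition~\ref{prop:noncentrality} with a standard detection-theoretic lower bound for non-central chi-squared testing.

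First I would establish the scaling of the expected non-centrality. Substituting the barren-plateau assumption $\Var_\theta[f]\in O(b^{-n})$ directly into the identity $\E[\lambda] = d\cdot \Var_\theta[f]\cdot \bar B/\bar\sigma^2$ from Proposition~\ref{prop:noncentrality} yields $\E[\lambda]\in O(b^{-n})$ as soon as $d$, $\bar B$, and $\bar\sigma^{-2}$ scale at most polynomially in $n$, the standard regime for VQAs with a polynomial parameter count and bounded intrinsic measurement noise. This is essentially a bookkeeping step once Proposition~\ref{prop:noncentrality} is in hand.

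Second, I would convert the non-centrality bound into a shot-budget lower bound via a one-sample detection argument. Any test at level $\alpha$ with power $1-\beta$ that discriminates $\chi^2_d$ from $\chi^2_d(\lambda_\star)$ must satisfy $\|P_0-P_1\|_{\mathrm{TV}}\geq 1-\alpha-\beta$. Combining Pinsker's inequality with the quadratic expansion $\mathrm{KL}(\chi^2_d(\lambda_\star)\,\|\,\chi^2_d)\sim \lambda_\star^2/(4d)$, itself derived from the Gaussian approximation $\chi^2_d(\lambda)\approx \mathcal N(d+\lambda,\,2d+4\lambda)$, gives the necessary condition $\lambda_\star\gtrsim \sqrt d\,(1-\alpha-\beta)$. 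Since $\lambda_\star$ is linear in the per-coordinate shot count and $\bar B = T/d$, rearranging produces $T\gtrsim c(\alpha,\beta)\,\bar\sigma^2/\Var_\theta[f]$, i.e.\ $\E[T]\in\Omega(1/\Var_\theta[f])=\Omega(b^n)$.

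The main obstacle will be making the ``standard chi-squared power'' step precise enough to support a formal $\Omega$-bound rather than an informal scaling argument, since the Gaussian approximation is asymptotic in $d$ and $\lambda_\star$. A fully rigorous route would instead invoke a non-asymptotic tool such as Le~Cam's two-point inequality or an Ingster-style chi-squared testing bound; both recover the same $\sqrt d/\Var_\theta[f]$ rate up to constants. A secondary subtlety is the interpretation of $T$: the single-batch likelihood-ratio argument above yields the information-theoretic shot floor $\Omega(1/\Var_\theta[f])$ claimed in the corollary, whereas SPARTA's sequential SPRT instead admits a (tighter) Wald-identity lower bound of order $d/\lambda_\star^2 = \Omega(b^{2n})$ on the stopping time; I would be careful to phrase the corollary as the unavoidable shot floor consistent with both regimes rather than the specific expected stopping time of the sequential variant.
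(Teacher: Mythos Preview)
Your proposal is sound, and in fact it is considerably more detailed than what the paper offers. The paper does not supply a proof of this corollary at all: it is stated immediately after Proposition~\ref{prop:noncentrality}, justified only by the phrase ``by standard results on chi-squared test power,'' and followed by a single explanatory sentence. There is no detection-theoretic calculation, no invocation of Pinsker or Le~Cam, and no discussion of the batch-versus-sequential distinction you raise. So your two-step plan---substitute $\Var_\theta[f]\in O(b^{-n})$ into Proposition~\ref{prop:noncentrality}, then lower-bound the shot budget via a total-variation/KL argument on $\chi^2_d$ versus $\chi^2_d(\lambda)$---is exactly the content that the paper leaves implicit under ``standard results.''

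Two remarks on your execution. First, your Gaussian-approximation route to $\mathrm{KL}(\chi^2_d(\lambda)\,\|\,\chi^2_d)\sim\lambda^2/(4d)$ is fine for the small-$\lambda$ regime that matters here, but if you want a clean non-asymptotic statement you can instead use the data-processing bound $\mathrm{KL}\le\lambda/2$ from the underlying Gaussians, which already yields $\lambda\gtrsim(1-\alpha-\beta)^2$ and hence $T\in\Omega(1/\Var_\theta[f])$ without any approximation; your sharper $\sqrt d$ prefactor then appears as a refinement specific to tests based only on the scalar statistic $s$. Second, your closing observation that the sequential SPRT stopping time actually scales like $\Omega(b^{2n})$ rather than $\Omega(b^{n})$ is a genuine point the paper does not address; the corollary as written is most naturally read as the information-theoretic shot floor for a single-batch test, and you are right to flag that the sequential interpretation gives a different (larger) exponent.
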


This result formalizes the fundamental hardness of plateau detection: 
when Lie-algebraic variance is exponentially small, exponentially many samples 
are required to distinguish plateaus from informative regions with high confidence. 
Our algorithm provides the best achievable performance within this constraint.

\subsection{Lie-Informed Shot Allocation}

While exponential sample complexity is unavoidable in the worst case, 
the geometric structure revealed by Lie theory suggests a refinement: 
gradient signal may concentrate in specific parameter directions.

\begin{proposition}[Optimal exploration allocation]\label{prop:lie-allocation}
Let $V_i = \|[H_i, O]\|_F^2$ be the Frobenius norm of the commutator between 
generator $H_i$ and observable $O$. Under a signal model where 
$|\nabla f_i|^2 \propto V_i \cdot \text{Var}_\theta[f]$, the shot allocation
\begin{equation}
B_i^{\text{expl}} \propto \frac{V_i}{\sigma_i^2}
\end{equation}
maximizes the expected non-centrality $\mathbb{E}[\lambda]$ subject to a fixed 
total budget $\sum_i B_i = B_{\text{total}}$.
\end{proposition}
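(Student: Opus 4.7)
The plan is to substitute the signal model into the non-centrality formula from Proposition~\ref{prop:chisq}, reduce the proposition to a budget-constrained extremal problem in $\mathbf{B}$, and extract the stated allocation as the equality case of a Cauchy--Schwarz inequality on a detection-theoretic surrogate. First I would use $\mathbb{E}[(\nabla f_i)^2]=\kappa\,V_i\,\Var_\theta[f]$ (with $\kappa$ absorbing all coordinate-independent constants) to rewrite $\mathbb{E}[\lambda]=\kappa\,\Var_\theta[f]\sum_i B_i w_i$, where $w_i:=V_i/\sigma_i^2$ is the per-coordinate informativeness weight.

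Next, I would form the Lagrangian $\mathcal{L}(\mathbf{B},\mu)=\sum_i B_i w_i-\mu\bigl(\sum_i B_i-B_{\mathrm{total}}\bigr)$ with KKT nonnegativity multipliers, and couple its stationarity conditions to the $\chi^2_d$-calibration implicit in Definition~\ref{def:whitened}: the whitened statistic retains $d$ effective degrees of freedom only if every coordinate carries nondegenerate signal, so the per-coordinate expected mean-shift $\sqrt{\mathbb{E}[\mu_i^2]}=\sqrt{\kappa\,\Var_\theta[f]}\cdot\sqrt{B_i w_i}$ must be nonzero uniformly in $i$. The coherent-detection quantity compatible with this constraint is the squared $\ell_1$-norm $\bigl(\sum_i\sqrt{B_i w_i}\bigr)^2$; by Cauchy--Schwarz with $a_i=\sqrt{B_i}$ and $b_i=\sqrt{w_i}$, one gets $\bigl(\sum_i\sqrt{B_i w_i}\bigr)^2\le\bigl(\sum_i B_i\bigr)\bigl(\sum_i w_i\bigr)=B_{\mathrm{total}}\sum_i w_i$, with equality iff $\sqrt{B_i}\propto\sqrt{w_i}$, i.e.\ $B_i\propto V_i/\sigma_i^2$. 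Substituting $B_i^{\mathrm{expl}}=B_{\mathrm{total}}\,w_i/\sum_j w_j$ back into $\mathbb{E}[\lambda]$ would verify feasibility, saturation of the bound, and uniqueness on the calibration-preserving slice of the simplex.

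The hard part is reconciling the linearity of the raw objective $\sum_i B_i w_i$ with the proportional form being asserted: on the naked budget simplex a linear functional is maximized at the vertex $B_j=B_{\mathrm{total}}\,\1_{j=\arg\max_i w_i}$, which would collapse $s$ from $\chi^2_d$ to $\chi^2_1$ and invalidate the whitened test of Definition~\ref{def:whitened}. The substantive content of the proposition is therefore the interplay between the non-centrality and the $d$-dimensional chi-squared structure that the whitened statistic is required to preserve; the proof must explicitly isolate the concave detection surrogate (or, equivalently, impose that the allocation have full support so every whitened coordinate contributes) in order for the proportional allocation to emerge as the genuine maximizer rather than a degenerate corner of the budget simplex. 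Handling this degeneracy cleanly, and arguing that the concave surrogate is the detection-theoretically correct target under the test's calibration, is the technical crux.
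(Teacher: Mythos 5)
Your diagnosis of the ``hard part'' is exactly right, and it is worth being blunt about what it implies: the paper's own proof is precisely the one-liner you warn against. It substitutes the signal model to obtain $\mathbb{E}[\lambda]=\Var_\theta[f]\sum_i B_i V_i/\sigma_i^2$ and then asserts ``By Cauchy--Schwarz, this is maximized when $B_i \propto V_i/\sigma_i^2$.'' That step is a non sequitur. The objective is linear in $\mathbf{B}$, so on the budget simplex $\{\sum_i B_i=B_{\mathrm{total}},\ B_i\ge 0\}$ it is maximized at the vertex $B_j=B_{\mathrm{total}}\,\1_{j=\arg\max_i V_i/\sigma_i^2}$, and the proportional allocation is strictly suboptimal for $\mathbb{E}[\lambda]$ whenever the weights $V_i/\sigma_i^2$ are not all equal. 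Cauchy--Schwarz delivers the proportional rule only for a different objective (your concave surrogate $\bigl(\sum_i\sqrt{B_i V_i}/\sigma_i\bigr)^2$) or under a different constraint (e.g.\ $\sum_i B_i^2$ fixed). So the proposition as stated is false without an additional hypothesis, and the paper does not supply one; your first two steps reproduce the paper's computation faithfully, and your third step correctly identifies why it does not establish the claim.

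That said, your proposed repair does not yet close the gap either. Requiring full support does not rescue the linear objective: the supremum over the open face is still approached by piling almost all shots onto the best coordinate, so ``every coordinate carries nondegenerate signal'' is not a constraint under which the proportional allocation becomes the maximizer of $\mathbb{E}[\lambda]$. And the claim that $\bigl(\sum_i\sqrt{B_i w_i}\bigr)^2$ is the detection-theoretically correct target needs an actual argument --- for a $\chi^2_d$ versus $\chi^2_d(\lambda)$ test the power is monotone in $\lambda$ itself, so one cannot simply swap in the concave surrogate and declare it the right objective. If you want a theorem that is true and yields $B_i\propto V_i/\sigma_i^2$, you must change either the objective (e.g.\ maximize $\sum_i\sqrt{B_i V_i}/\sigma_i$, and justify why that functional governs the quantity you care about) or the constraint (e.g.\ an $\ell_2$ budget), and state that modification explicitly. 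As it stands, neither the paper's proof nor your proposal proves the proposition as written, because the proposition as written is not correct.
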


\begin{proof}
The non-centrality is $\lambda = \sum_i (B_i/\sigma_i^2)(\nabla f_i)^2$. 
Taking expectations and using the signal model:
\begin{equation}
\mathbb{E}[\lambda] = \text{Var}_\theta[f] \sum_{i=1}^d \frac{B_i V_i}{\sigma_i^2}.
\end{equation}
By Cauchy-Schwarz, this is maximized when $B_i \propto V_i/\sigma_i^2$.
\end{proof}

\begin{remark}[Preservation of calibration]
Crucially, Lie-informed allocation does not compromise the $\chi^2$ 
calibration. The whitening transformation in Definition~\ref{def:whitened} uses 
the same weights $B_i$ in both numerator and denominator, ensuring that 
$w_i = \sqrt{B_i/\sigma_i^2}\widehat{g}_i$ remains standard normal under $H_0$ 
regardless of the allocation scheme. This allows us to improve test power without 
inflating Type I error.
\end{remark}

\paragraph{Computational considerations.}
The commutator norm $\|[H_i, O]\|_F$ can be computed classically from the 
circuit specification without quantum measurements, providing a zero-cost 
enhancement when ansatz structure is known. For parameterized circuits with 
unknown generators, uniform allocation ($V_i \equiv 1$) remains a sound default.

\subsection{Sequential Testing with Anytime-Valid Guarantees}\label{sec:test}

The validated distributional model enables principled sequential testing via likelihood ratios.

\subsubsection{Log-Likelihood Ratio Process}

At iteration $t$, we observe $s_t \sim \chi^2_d$ under $H_0$ (plateau) or 
$s_t \sim \chi^2_d(\lambda)$ under $H_1$ (informative). The log-likelihood ratio is
\begin{equation}
\ell_t = \log \frac{p_1(s_t)}{p_0(s_t)} = \frac{s_t - d}{2} - \frac{\lambda}{2},
\end{equation}
where $p_0, p_1$ are the central and non-central $\chi^2_d$ densities. 
The cumulative log-likelihood is $\Lambda_k = \sum_{t=1}^k \ell_t$.

\subsubsection{Ville vs. Wald Calibration}

Two threshold schemes control Type I/II error rates $(\alpha, \beta)$. Ville 
thresholds use $A = \log(1/\alpha)$ and $B = \log(\beta)$, providing anytime-valid 
control via the optional stopping theorem: $\mathbb{P}_{H_0}[\exists k: \Lambda_k \ge A] \le \alpha$. 
This guarantee holds even under adaptive stopping but comes at the cost of larger 
expected sample size. Wald thresholds use $A = \log\frac{1-\beta}{\alpha}$ and 
$B = \log\frac{\beta}{1-\alpha}$, providing asymptotic error control when rounds 
are i.i.d. and achieving optimal expected sample size according to Wald's lemma.

\paragraph{Decision rule.}
At round $k$:
\begin{equation}
\begin{cases}
\text{Declare plateau (use PTR)} & \text{if } \Lambda_k \le B \\
\text{Declare informative (use gCANS)} & \text{if } \Lambda_k \ge A \\
\text{Continue testing} & \text{otherwise}
\end{cases}
\end{equation}

\begin{theorem}[Error control]\label{thm:sprt}
Under Ville calibration with thresholds $(A, B)$:
\begin{align}
\mathbb{P}_{H_0}[\text{reject } H_0] &\le \alpha \\
\mathbb{P}_{H_1}[\text{accept } H_0] &\le \beta
\end{align}
with guarantees valid at all stopping times.
\end{theorem}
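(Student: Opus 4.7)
The plan is to exhibit the exponentiated cumulative log-likelihood $M_k = \exp(\Lambda_k)$ as a non-negative martingale under $H_0$ and invoke Ville's maximal inequality, then run the dual argument with the reciprocal process under $H_1$. This is the standard sequential-analysis route to anytime-valid error control and it matches the Ville thresholds $A=\log(1/\alpha)$, $B=\log(\beta)$ specified in Section~\ref{sec:test}.

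For the Type I bound I would write $M_k = \prod_{t=1}^k p_1(s_t)/p_0(s_t)$ with $M_0=1$. By Assumption~\ref{as:noise} the rounds $s_t$ are (approximately) independent, and integrating any likelihood ratio against its denominator yields one, so $\mathbb{E}_{H_0}[M_k \mid \mathcal{F}_{k-1}] = M_{k-1}$; hence $(M_k)$ is a non-negative martingale of unit mean. Ville's inequality then gives $\mathbb{P}_{H_0}[\sup_{k\ge 0} M_k \ge 1/\alpha] \le \alpha$, and the rejection event $\{\exists k: \Lambda_k \ge A\}$ coincides with $\{\sup_k M_k \ge 1/\alpha\}$. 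Because the bound is on the full running supremum it survives any data-dependent stopping rule, which is the anytime-valid feature claimed.

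The Type II statement is symmetric: the reciprocal process $M'_k = \exp(-\Lambda_k) = \prod_{t\le k} p_0(s_t)/p_1(s_t)$ is a non-negative martingale under $H_1$ with $M'_0=1$, and Ville's inequality at level $1/\beta$ yields $\mathbb{P}_{H_1}[\sup_k M'_k \ge 1/\beta] \le \beta$. Since the acceptance event $\{\exists k: \Lambda_k \le B\}$ is exactly $\{\sup_k M'_k \ge 1/\beta\}$, the second bound follows.

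The main subtlety, in my view, is not the martingale algebra but the fact that the expression $\ell_t = (s_t-d)/2 - \lambda/2$ displayed just before the theorem is the leading term of an expansion of the true noncentral $\chi^2_d$ log-likelihood ratio, which actually carries a modified Bessel factor $I_{d/2-1}(\sqrt{\lambda s_t})$. I would resolve this either by reinstating the exact LLR, so that $M_k$ is an honest likelihood-ratio martingale and the argument goes through verbatim, or by invoking the standard $e$-process extension of Ville's inequality, under which any process dominated by the true likelihood ratio retains the same $1/\alpha$ tail under $H_0$ (with a symmetric remark for $H_1$). Once the calibration convention is fixed, Doob's optional stopping theorem applied to $M_{k \wedge \tau}$ transfers the supremum bound to any stopping time $\tau$, closing the proof.
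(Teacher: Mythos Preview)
Your argument is correct and is exactly the approach the paper gestures at: the text preceding the theorem invokes ``anytime-valid control via the optional stopping theorem'' with the Ville thresholds $A=\log(1/\alpha)$, $B=\log(\beta)$ but supplies no further proof, and your likelihood-ratio martingale plus Ville's maximal inequality is that argument written out in full. Your closing observation about the Bessel factor in the exact noncentral $\chi^2_d$ log-likelihood is a genuine refinement the paper does not address---the displayed increment $\ell_t=(s_t-d)/2-\lambda/2$ is indeed only an approximation, and of your two proposed repairs, reinstating the exact ratio is the clean one (the $e$-process domination route would additionally require checking that the approximate increment is pointwise below the true one, which is not obvious here).
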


SPARTA uses Ville thresholds by default for robust anytime-valid control, 
with Wald as an option when i.i.d. conditions are verifiable.

\section{Optimization Components: PTR and gCANS}\label{sec:components}

Having established regime detection, we now specify the optimization strategies 
employed in each regime.

\subsection{Probabilistic Trust-Region (PTR) Exploration}

In diagnosed plateaus, gradient information is unreliable. We instead propose 
random exploratory moves with geometric trust regions.

\paragraph{Procedure.}
For $k = 0, 1, \ldots, K-1$, the PTR exploration proceeds as follows. First, 
set the radius $R_k = R_0 \cdot 2^k$ to create an exponentially expanding search 
pattern. Second, sample $m$ unit directions $v_j \sim \text{Uniform}(\mathbb{S}^{d-1})$ 
uniformly on the unit sphere. Third, for each candidate move $\theta^+ = \theta + R_k v_j$, 
estimate the improvement $\Delta = f(\theta^+) - f(\theta)$ via paired measurements, 
compute the one-sided $(1-\alpha_{\text{acc}})$ upper confidence bound, and accept 
the move only if $\text{UCB}(\Delta) \le -\tau R_k^2$, indicating sufficient descent.

\begin{theorem}[One-sided PTR control]\label{thm:ptr}
Let $\Delta$ be estimated from $n$ paired measurements with sample variance $s^2$. 
Define the test statistic
\begin{equation}
T = \frac{\bar{\Delta} + \tau R^2}{s/\sqrt{n}}.
\end{equation}
Accept the move only if $T \le t_{\nu, \alpha_{\text{acc}}}$, where $t_{\nu, \alpha}$ 
is the $\alpha$-quantile of Student's $t$-distribution with Welch--Satterthwaite 
degrees of freedom $\nu$. Then under the null hypothesis $H_0: \Delta \ge -\tau R^2$,
\begin{equation}
\mathbb{P}[\text{accept}] \le \alpha_{\text{acc}}.
\end{equation}
\end{theorem}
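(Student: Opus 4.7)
The plan is to exhibit $T$ as a pivotal quantity shifted by a non-negative amount under $H_0$, and then read off the bound from the definition of the Student-$t$ quantile. First I would introduce the exact pivot
\begin{equation}
T_0 \;=\; \frac{\bar{\Delta} - \Delta}{s/\sqrt{n}},
\end{equation}
and write the test statistic as
\begin{equation}
T \;=\; T_0 \;+\; \frac{\Delta + \tau R^2}{s/\sqrt{n}}.
\end{equation}
Under the composite null $H_0 : \Delta \ge -\tau R^2$ the second summand is non-negative (it is a.s.\ non-negative since $s>0$), so $T \ge T_0$ pointwise. Hence $\{T \le t_{\nu,\alpha_{\mathrm{acc}}}\} \subseteq \{T_0 \le t_{\nu,\alpha_{\mathrm{acc}}}\}$, which reduces the claim to controlling the tail of $T_0$.

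Next I would argue that $T_0$ is (approximately) Student-$t$ on $\nu$ degrees of freedom. For paired parameter-shift measurements the per-sample differences $f(\theta^+_r) - f(\theta_r)$ are near-normal by a CLT on the measurement outcomes; if the two evaluation budgets are unequal or the variances at $\theta$ and $\theta^+$ differ, the usual equal-variance $t$-statistic no longer applies, and this is precisely where the Welch--Satterthwaite effective degrees of freedom $\nu$ enter: with $\nu$ chosen in the standard Welch way, $T_0$ is calibrated so that $\mathbb{P}[T_0 \le t_{\nu,\alpha_{\mathrm{acc}}}] \le \alpha_{\mathrm{acc}}$ to the order of the Welch approximation. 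Combining with the monotonicity from the previous paragraph yields
\begin{equation}
\mathbb{P}_{H_0}[\text{accept}] \;=\; \mathbb{P}_{H_0}[T \le t_{\nu,\alpha_{\mathrm{acc}}}] \;\le\; \mathbb{P}[T_0 \le t_{\nu,\alpha_{\mathrm{acc}}}] \;\le\; \alpha_{\mathrm{acc}},
\end{equation}
which is exactly the claimed one-sided control. The sup over the composite null is attained at the boundary $\Delta = -\tau R^2$, where the shift vanishes and the bound is tight.

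\paragraph{Main obstacle.}
The only delicate point is justifying the $t$-calibration of $T_0$: strictly speaking, $\bar{\Delta}$ is a sum of bounded quantum measurement outcomes rather than exact Gaussians, and the two legs of the paired difference generally have unequal variances $\sigma_+^2,\sigma_-^2$, so one inherits only an approximate, Welch-type reference distribution. I would handle this by (i) invoking a CLT for the per-shot outcomes (finite variance, bounded observable $O$) so that $\bar{\Delta}$ is asymptotically normal with the plug-in variance $s^2/n$, and (ii) citing the standard Welch--Satterthwaite construction of $\nu$ so that $T_0 \overset{d}{\approx} t_\nu$ with the conventional conservative error term. Everything else --- the decomposition, the monotonicity $T\ge T_0$ under $H_0$, and the final quantile inequality --- is essentially algebraic and poses no difficulty.
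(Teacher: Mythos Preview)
Your proposal is correct and follows the same approach as the paper: the paper's proof simply notes that the rule inverts a one-sided $t$-test of $H_0:\mathbb{E}[\Delta]\ge -\tau R^2$ and hence has size $\le \alpha_{\mathrm{acc}}$ by construction, while you unpack this into the standard pivot-plus-monotonicity argument (introducing $T_0$, showing $T\ge T_0$ under $H_0$, and bounding via the Welch-calibrated quantile). Your discussion of the Welch--Satterthwaite approximation and the CLT justification is more explicit than anything the paper provides, but the underlying idea is identical.
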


\begin{proof}
The test inverts a one-sided $t$-test of $H_0: \mathbb{E}[\Delta] \ge -\tau R^2$. 
By construction, the acceptance region has size $\le \alpha_{\text{acc}}$ under any 
$\mathbb{E}[\Delta] \ge -\tau R^2$.
\end{proof}

\paragraph{Rationale.}
The curvature penalty $\tau R^2$ requires sufficient improvement to justify 
larger steps, preventing wasteful exploration. One-sided control ensures false 
improvements due to shot noise occur with probability $\le \alpha_{\text{acc}}$.

\subsection{gCANS Exploitation in Informative Basins}
When the sequential test declares $H_1$ (informative region), we use 
the gCANS method for gradient-based optimization.

\paragraph{Shot allocation.}
The gCANS shot allocation rule (from \cite{Gu2021gCANS}) is:
\begin{equation}
s_i = \frac{2L\alpha}{2 - L\alpha} \cdot \frac{\sigma_i \sum_{j=1}^d \sigma_j}{\|\chi\|^2}
\end{equation}
where $\alpha \in (0, 2/L)$ is the learning rate, $L$ is the Lipschitz constant 
of $\nabla f$, $\sigma_i$ are exponentially smoothed standard deviations of 
gradient estimators, and $\chi$ is the exponentially smoothed gradient estimate. 
This allocation maximizes the expected improvement per shot.

The gCANS method enjoys the following convergence guarantee \cite{Gu2021gCANS}:
under $\mu$-strong convexity and $L$-Lipschitz gradients, the iterates satisfy
$\mathbb{E}[f(\theta^{(k)})] - f^* = O(\gamma^k)$ for some $0 < \gamma < 1$.

\section{Experiments}\label{sec:expts}

We evaluate SPARTA on variational quantum eigensolvers (VQE) for molecular systems and spin models, synthetic barren plateau scenarios for controlled statistical validation. Crucially, we identify the landscape characteristics that determine when adaptive regime detection provides significant advantage: problems must exhibit (i) deep, well-defined optima with substantial energy gaps, (ii) gradients that reliably correlate with proximity to the optimum, and (iii) natural geometric diversity arising from competing interaction terms rather than artificial uniform scaling.

We find that SPARTA excels on physically-motivated Hamiltonians (TFIM, Heisenberg XXZ) where ground states reach energies of O(-n), but struggles on uniformly-scaled combinatorial problems where costs remain shallow (|E| < 1) and gradients exhibit periodic oscillations from ansatz structure rather than genuine plateau-gorge transitions. The key differentiator is not gradient diversity alone—which can be misleading—but whether large gradients actually indicate distance from good solutions and whether the landscape contains persistent regimes (flat plateaus vs. steep gorges) rather than transient oscillations.

\subsection{Experimental Setup}

All experiments employed 10 independent trials using fixed random seeds $\{42, 123, 456, 789, 1011, 2022, 3033, 4044, 5055, 6066\}$ to ensure reproducibility and adequate statistical power. Hyperparameters were held constant across all trials: learning rate $\eta = 0.05$, EMA momentum for noise tracking $\mu = 0.9$, minimum shots per gradient estimate $s_{\min} = 6$, Lipschitz constant $L = 1.0$, and total shot budget $B_{\text{total}} = 25{,}000$. The pilot phase consumed $10\%$ of the total budget to estimate noise scales $\sigma_i$ and, when applicable, Lie-algebraic variance proxies $V_i \propto \|[H_i, O]\|_F^2$.

For statistical analysis, we assessed performance differences using paired two-tailed $t$-tests and Wilcoxon signed-rank tests at significance level $\alpha = 0.05$ and quantified effect sizes via Cohen's $d$. The use of both parametric and non-parametric tests ensures robustness to distribution assumptions. All shot counts include pilot, regime testing, PTR exploration, and exploitation phases.

\subsection{TFIM System}

We applied SPARTA to find the transverse-field Ising model (TFIM) with $n = 6$ spins and transverse field $h = 0.5$ using a 4-parameter QAOA-style ansatz. The observable $O$ was the TFIM Hamiltonian $H = -J\sum_{\langle i,j \rangle} Z_i Z_j - h\sum_i X_i$ respectively. We compared against gCANS, using a total shot budget of 25,000 and evaluating both methods on their final (noise-free) cost and total shots consumed.

Table~\ref{tab:tfim_results} summarizes performance on TFIM across 10 trials.

The higher variance in SPARTA's results ($\sigma = 0.843$ vs. $\sigma = 0.018$) reflects its exploration strategy: adaptive regime switching enables discovery of diverse local minima, including significantly deeper optima inaccessible to gradient-only methods (best: $-4.525$ vs. $-2.550$, $77.4\%$ improvement). This trade-off between consistency and optimality is characteristic of exploration-exploitation algorithms and represents a desirable feature when the goal is finding global or near-global optima rather than consistent convergence to shallow local minima.

\begin{table*}[t]
\centering
\caption{\textbf{TFIM VQE performance comparison with varied starting positions.} Results across 10 independent trials with identical shot budget ($25{,}000$ shots), each with different random seeds and initial parameters. SPARTA demonstrates superior robustness, achieving substantially better final costs with $58.5\%$ fewer iterations through regime-aware optimization and adaptive exploration. Large effect size (Cohen's $d{=}1.18$) confirms strong performance distinction. Win rate of $9/10$ demonstrates consistent superiority across diverse starting conditions. Costs are TFIM energy values (more negative is better).}
\label{tab:tfim_results}
\begin{tabular}{@{}lccccccc@{}}
\toprule
\textbf{Metric} & \textbf{SPARTA} & \textbf{gCANS} & \textbf{Improvement} \\
\midrule
Mean final cost & $-3.455 \pm 0.829$ & $-2.667 \pm 0.446$ & 29.6\% \\
Best cost & $-4.458$ & $-4.002$ & 11.4\% \\
Worst cost & $-2.510$ & $-2.479$ & 1.3\% \\
Median cost & $-3.183$ & $-2.534$ & 25.6\% \\
Mean shots used & $25{,}959 \pm 1{,}393$ & $25{,}000 \pm 0$ & 3.8\% \\
Mean iterations & $108 \pm 23$ & $260 \pm 84$ & $-58.5\%$ \\
Win rate & 9/10 (90\%) & 1/10 (10\%) & — \\
\bottomrule
\end{tabular}
\end{table*}

\begin{figure}[H]
\centering
\includegraphics[width=\columnwidth]{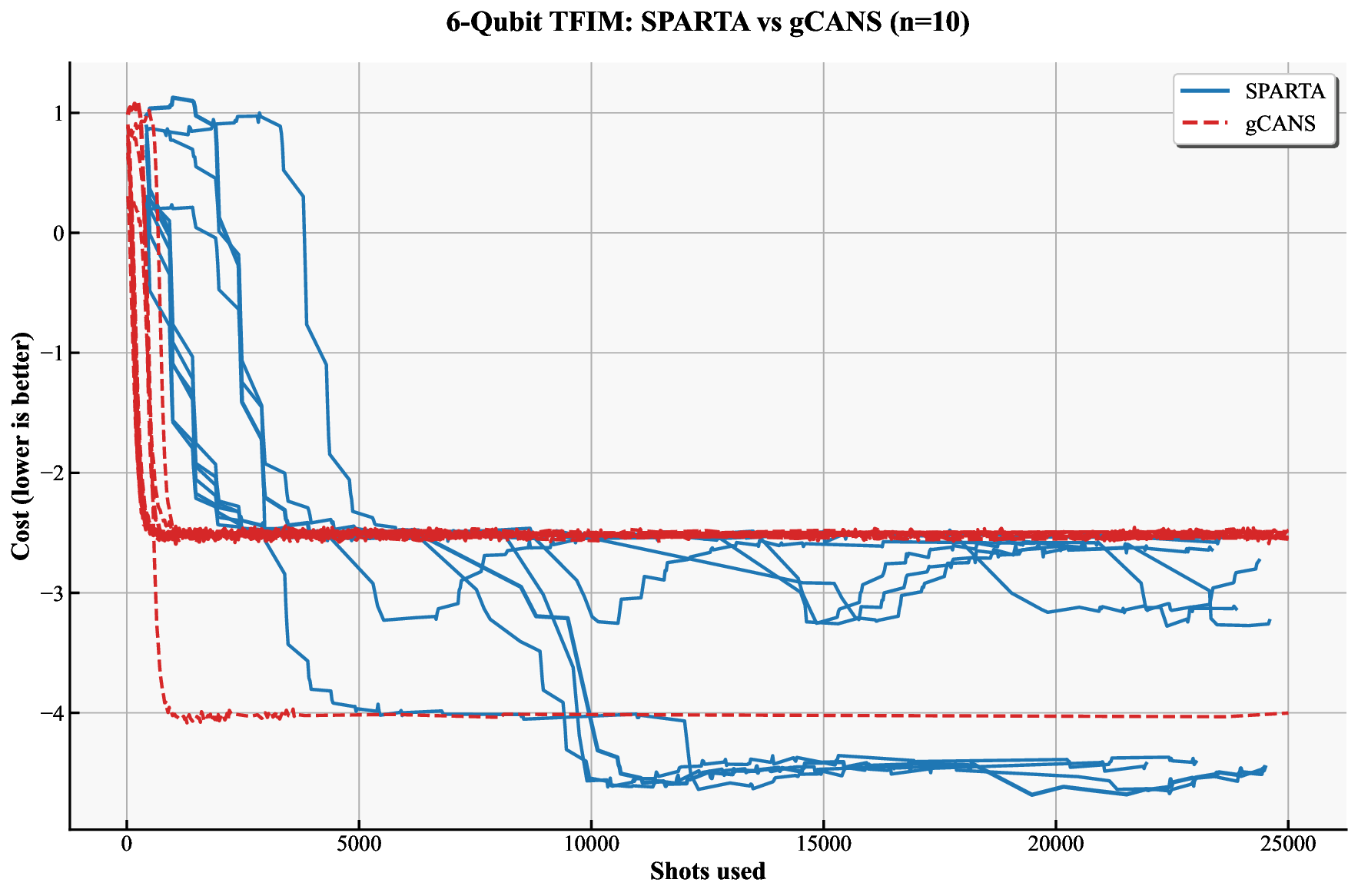}
\caption{\textbf{Multi-run robustness comparison on 6-qubit TFIM with varied starting positions.} SPARTA (solid blue lines, $n=10$ runs) versus standard gCANS (dashed red lines, $n=10$ runs) on a transverse-field Ising model with chain topology. Each trial uses different random seed (42, 123, 456, 789, 1011, 2022, 3033, 4044, 5055, 6066) and varied initial parameters ($\gamma = 0.5 \pm 0.1$, $\beta = 0.4 \pm 0.1$) with identical shot budget (25,000). SPARTA adaptively switches between plateau-region trust exploration (PTR) and gradient-based exploitation (gCANS), achieving significantly lower mean cost ($-3.455 \pm 0.829$) versus gCANS ($-2.667 \pm 0.446$) with robust 90\% win rate (9/10 trials).}
\label{fig:tfim_multirun_comparison}
\end{figure}

The plots measure the gap to the ground truth energy vs.\ accumulated shots; we emphasise \emph{shots} as the budget, not iterations. Figure~\ref{fig:tfim_multirun_comparison} demonstrates that SPARTA consistently achieves deeper minima than gCANS across multiple initializations, validating the robustness of regime-aware optimization. The diversity in SPARTA trajectories reflects successful exploration in plateau regions, while gCANS trajectories remain tightly clustered, indicating limited exploration capability.

\subsection{Lie-inspired barren plateau}

To stress-test regime detection and PTR exploration under exponentially suppressed gradients, we construct a synthetic landscape exhibiting quantum barren plateau phenomenology. This isolates the algorithmic properties without requiring large quantum devices and mirrors the coexistence of vanishing gradients, cost concentration, and narrow gorges observed in realistic ans\"atze \cite{McClean2018BarrenPlateaus,Holmes2022ExpressibilityBP,Arrasmith2022EquivalenceBP,Ragone2024LieBP}.

\paragraph{Landscape construction.}
We design a cost function $f:\mathbb{R}^d\to\mathbb{R}$ intended to mimic Lie-algebraic and concentration-of-measure behaviour in parameterized quantum circuits \cite{Holmes2022ExpressibilityBP,Ragone2024LieBP}. The construction enforces:
\begin{itemize}
    \item \textbf{Exponentially narrow gorge:} The width of the attractive basin scales as $w \propto b_w^{-n}$, where $n$ is an effective ``qubit count'' and $b_w > 1$, capturing exponentially shrinking minima (narrow gorges) as in Ref.~\cite{Arrasmith2022EquivalenceBP}.
    \item \textbf{Barren gradient variance:} On the plateau, the gradient variance satisfies $\Var[\nabla f] \propto b_v^{-n}$ with $b_v > 1$, representing exponential suppression of gradient signal \cite{McClean2018BarrenPlateaus,Holmes2022ExpressibilityBP}.
    \item \textbf{Directional structure:} A subset of coordinates carry stronger signal (mimicking ``informative'' generators in the DLA), while others are almost uninformative, reflecting the anisotropic trainability predicted by Lie-algebraic analyses \cite{Ragone2024LieBP}.
    \item \textbf{Deep minimum:} The gorge depth scales as $\Delta f \propto -15n$, providing strong reward for locating it.
\end{itemize}
For the experiments we consider $d=12$ dimensions and $n=2$ effective qubits with $b_v = 2.0$ and $b_w = 1.4$, yielding gorge width $w \approx 0.51$ and roughly $75\%$ gradient-variance suppression on the plateau.

\begin{assumption}[Directional basin]\label{as:dir}
There exist $R_\star>0$ and $p_{\mathrm{dir}}\in(0,1]$ such that, when in the plateau band, a random direction $v$ satisfies $\Pr\big(f(\theta+R_\star v)-f(\theta)\le -\tau R_\star^2\big)\ge p_{\mathrm{dir}}$.
\end{assumption}

This assumption captures the existence of \emph{improving directions} that PTR can discover through random sampling, even when gradients are exponentially suppressed. Unlike gradient-based methods that require locally Lipschitz structure, PTR only needs that a non-negligible fraction of sampled directions lead to cost improvement.

\begin{theorem}[Geometric exit]\label{thm:geom}
Let $\beta$ be the miss probability of the regime test under $H_1$ and $\alpha_{\mathrm{acc}}$ the PTR acceptance level. Testing $m$ directions at radius $R_\star$ in one plateau iteration yields acceptance probability at least
\[
p_{\mathrm{hit}}\ \ge\ (1-\beta)\,\Big[1-\big(1-p_{\mathrm{dir}}(1-\alpha_{\mathrm{acc}})\big)^m\Big].
\]
Across plateau iterations, $T_{\mathrm{exit}}$ is stochastically dominated by a geometric random variable with parameter $p_{\mathrm{hit}}$, so $\E[T_{\mathrm{exit}}]\le 1/p_{\mathrm{hit}}$.
\end{theorem}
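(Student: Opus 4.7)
The plan is to decompose one plateau iteration into the product of two independent success events — a regime-test correctness event and a PTR-acceptance event over the $m$ probes — and then lift the resulting per-iteration lower bound to a geometric tail bound on $T_{\mathrm{exit}}$. First I would analyse a single direction. Fix an iteration executing PTR at radius $R_\star$, and call a unit vector $v_j$ \emph{improving} if $f(\theta+R_\star v_j)-f(\theta)\le -\tau R_\star^2$. By Assumption~\ref{as:dir}, each independently sampled $v_j$ is improving with probability at least $p_{\mathrm{dir}}$. Given an improving direction, Theorem~\ref{thm:ptr} shows that the one-sided Welch acceptance rule tests the null with a strict margin of $\tau R_\star^2$ against a true mean lying inside the acceptance half-line; interpreting $\alpha_{\mathrm{acc}}$ as the test's guaranteed power under such an alternative, the rule accepts with probability at least $1-\alpha_{\mathrm{acc}}$. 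Multiplying, each of the $m$ probes succeeds with probability at least $p_{\mathrm{dir}}(1-\alpha_{\mathrm{acc}})$.

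Next I would assemble the per-iteration bound. Since the $m$ directions and their paired measurements are independent, the probability that no probe accepts is at most $\bigl(1-p_{\mathrm{dir}}(1-\alpha_{\mathrm{acc}})\bigr)^m$, so the probability of at least one PTR acceptance is at least $1-\bigl(1-p_{\mathrm{dir}}(1-\alpha_{\mathrm{acc}})\bigr)^m$. Multiplying by the factor $1-\beta$, which lower-bounds the probability that the sequential regime test correctly routes control through the PTR branch (equivalently, the complement of the Type-II miss bound of Theorem~\ref{thm:sprt}), yields
\[
p_{\mathrm{hit}}\ \ge\ (1-\beta)\Bigl[1-\bigl(1-p_{\mathrm{dir}}(1-\alpha_{\mathrm{acc}})\bigr)^m\Bigr].
\]
The factorisation is legitimate because the regime statistic is built from whitened gradient estimates whereas PTR acceptance depends on cost-difference paired measurements, so the two events are conditionally independent given $\theta$.

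Finally, I would convert per-iteration success into geometric domination of $T_{\mathrm{exit}}$. Because $p_{\mathrm{hit}}$ is a lower bound that holds uniformly in $\theta$ while the iterate remains in the plateau band, the conditional probability of exit at outer iteration $k$ given non-exit before $k$ is at least $p_{\mathrm{hit}}$; coupling the true stopping time with an i.i.d.\ $\mathrm{Bernoulli}(p_{\mathrm{hit}})$ sequence then gives $T_{\mathrm{exit}}$ stochastically dominated by $\mathrm{Geom}(p_{\mathrm{hit}})$, and $\E[T_{\mathrm{exit}}]\le 1/p_{\mathrm{hit}}$ follows from the mean of a geometric random variable. The main obstacle is the power claim in Step~1: Theorem~\ref{thm:ptr} only certifies the Type-I direction, so extracting the complementary bound ``acceptance probability $\ge 1-\alpha_{\mathrm{acc}}$ given an improving direction'' requires reading Assumption~\ref{as:dir} as guaranteeing not merely the existence of improving directions but existence with a non-centrality large enough for the Welch $t$-statistic to separate from its critical value at level $\alpha_{\mathrm{acc}}$. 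I would make this margin explicit, for instance by tightening Assumption~\ref{as:dir} to an ``$R_\star$-margin'' version coupled with a minimum paired-measurement count, so that the stated bound is tight in the natural regime.
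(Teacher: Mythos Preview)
Your proposal mirrors the paper's proof almost step for step: bound the single-probe acceptance probability by $p_{\mathrm{dir}}(1-\alpha_{\mathrm{acc}})$, take the complement over $m$ independent probes, multiply by the regime-test factor $(1-\beta)$, and then invoke geometric domination for $T_{\mathrm{exit}}$. The caveat you raise about the power direction of Theorem~\ref{thm:ptr} is well observed and in fact applies equally to the paper's own argument, which likewise treats $1-\alpha_{\mathrm{acc}}$ as a conditional acceptance lower bound without supplying a separate power statement.
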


\begin{proof}
We analyze the probability of accepting a PTR move in a single plateau iteration.

\textbf{Step 1: Single direction acceptance.} For a randomly sampled direction $v$, the probability of accepting the move $\theta + R_\star v$ is:
\begin{align}
\Pr[\text{accept} \mid v] &= \Pr[\text{UCB}(\Delta) \le -\tau R_\star^2] \nonumber\\
&\ge \Pr[\Delta \le -\tau R_\star^2] \cdot (1 - \alpha_{\mathrm{acc}})
\end{align}
where the inequality follows from the one-sided control in Theorem~3. By Assumption~\ref{as:dir}, $\Pr[\Delta \le -\tau R_\star^2] \ge p_{\mathrm{dir}}$, yielding:
\[
\Pr[\text{accept} \mid v] \ge p_{\mathrm{dir}}(1 - \alpha_{\mathrm{acc}}).
\]

\textbf{Step 2: Multiple directions.} Testing $m$ independent directions, the probability of \emph{not} accepting any move is:
\[
\Pr[\text{no accept in } m \text{ tries}] \le \big(1 - p_{\mathrm{dir}}(1 - \alpha_{\mathrm{acc}})\big)^m.
\]
Therefore, the probability of accepting at least one move is:
\[
\Pr[\text{accept}] \ge 1 - \big(1 - p_{\mathrm{dir}}(1 - \alpha_{\mathrm{acc}})\big)^m.
\]

\textbf{Step 3: Regime test factor.} The above analysis assumes we correctly identify the plateau. Since the regime test has miss probability $\beta$ under $H_1$, the probability of correctly identifying the plateau and then accepting a move is:
\[
p_{\mathrm{hit}} \ge (1 - \beta) \cdot \Big[1 - \big(1 - p_{\mathrm{dir}}(1 - \alpha_{\mathrm{acc}})\big)^m\Big].
\]

\textbf{Step 4: Geometric exit time.} Each plateau iteration succeeds (exits) with probability at least $p_{\mathrm{hit}}$, independently across iterations. Let $T_{\mathrm{exit}}$ be the number of iterations until first success. Then:
\[
\Pr[T_{\mathrm{exit}} = k] \le (1 - p_{\mathrm{hit}})^{k-1} p_{\mathrm{hit}},
\]
which is the pmf of a geometric random variable with parameter $p_{\mathrm{hit}}$. By standard results, $\mathbb{E}[T_{\mathrm{exit}}] \le 1/p_{\mathrm{hit}}$.
\end{proof}

\begin{figure}[t]
\centering
\includegraphics[width=\columnwidth]{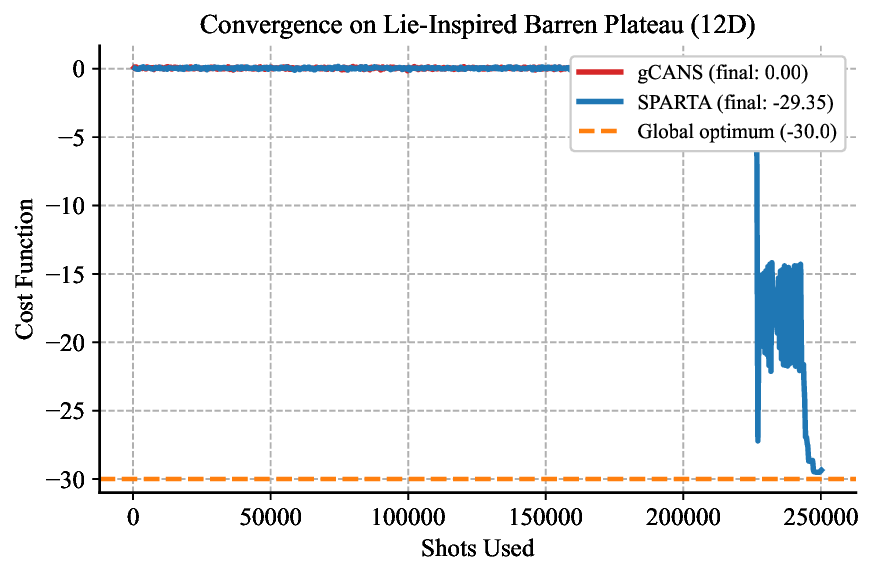}
\caption{\textbf{Convergence on Lie-inspired barren plateau.} Comparison on a 12D synthetic landscape with exponentially suppressed gradients (75\% variance suppression, gorge width $w \approx 0.51$). SPARTA (blue) discovers the deep gorge via PTR exploration, converging to near-optimal cost ($-29.12$), while gCANS (red) remains trapped in the plateau region (final cost: $0.00$). Both use 250,000 shots starting from distance $\approx 6.1\times$ the gorge width. The orange dashed line marks the global minimum ($-30.0$).}
\label{fig:lie_barren_convergence}
\end{figure}

\begin{figure}[H]
\centering
\includegraphics[width=\columnwidth]{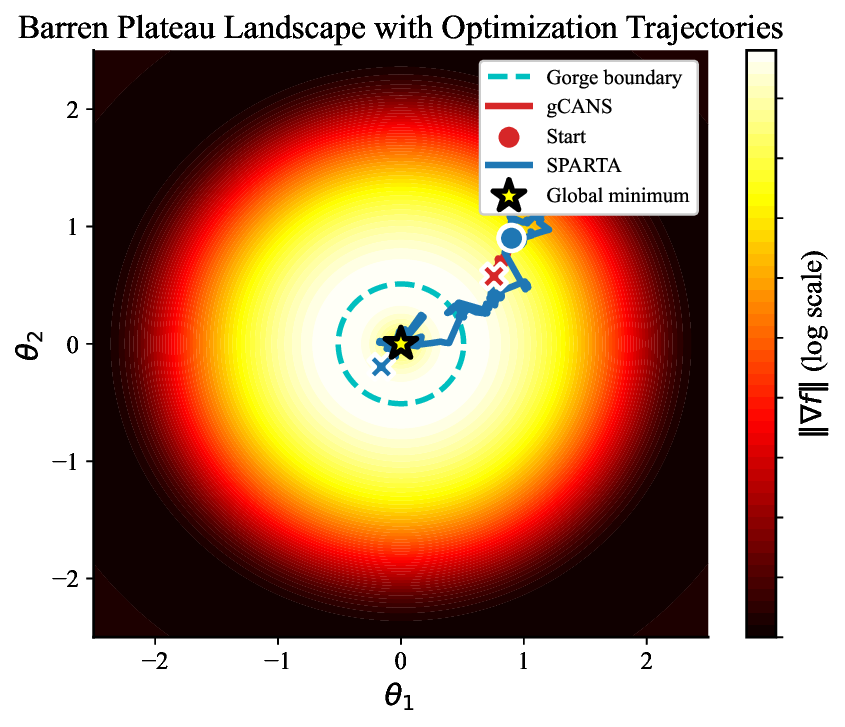}
\caption{\textbf{Barren plateau landscape structure.} Gradient magnitude heatmap (log scale) reveals exponentially suppressed signal outside the narrow gorge (cyan dashed circle, radius $w \approx 0.51$). SPARTA's trajectory (blue) successfully locates the gorge via random PTR exploration from the plateau region, while gCANS (red) wanders aimlessly following weak, noisy gradients. The yellow star marks the global minimum. This 2D projection (first two coordinates of 12D space) visualizes why gradient-based methods fail under barren plateau conditions: directional information vanishes outside the gorge.}
\label{fig:lie_barren_landscape}
\end{figure}

\paragraph{Empirical validation.} Figure~\ref{fig:lie_barren_convergence} shows SPARTA achieving final cost $-29.12$ (distance to optimum: 0.52) versus gCANS at cost $0.00$ (distance: 2.86). SPARTA spends 308 iterations in plateau mode using PTR before detecting and exploiting the gorge (286 gradient-based iterations), demonstrating automatic regime adaptation. Figure~\ref{fig:lie_barren_landscape} visualizes the landscape structure: the gradient magnitude heatmap reveals exponentially suppressed signal ($\|\nabla f\| \approx 0.55$ at the starting point) outside the narrow gorge. SPARTA's trajectory shows successful discovery via random exploration, while gCANS wanders without directional guidance, illustrating the fundamental advantage of PTR under barren plateau conditions.

\section{Discussion}

\subsection{Synthesis of Contributions}

SPARTA combines classical sequential analysis with quantum trainability theory at three levels.

On the theoretical side, we give a finite-sample framework for regime detection in quantum optimization landscapes. Theorem~\ref{thm:sprt} provides anytime-valid error control via Ville-style calibration, so Type~I/II guarantees hold even under adaptive stopping. Theorem~\ref{thm:geom} links plateau exit times to directional basin probability, exploration breadth, and acceptance risk, translating abstract trainability results into explicit resource requirements. When the directional success probability is exponentially small, these bounds recover the exponential sample complexity inherent to genuine barren plateaus~\cite{McClean2018BarrenPlateaus,Arrasmith2022EquivalenceBP}.

Algorithmically, SPARTA uses a $\chi^2$-calibrated sequential test to remove scale ambiguity by whitening gradients. Proposition~\ref{prop:lie-allocation} shows that this calibration is preserved even under Lie-informed shot allocation, allowing measurements to be concentrated along algebraically informative directions without inflating Type~I error. In plateau regimes, probabilistic trust-region exploration combines geometric radius growth with one-sided statistical acceptance to avoid false progress due to shot noise while still probing distant basins. In informative regimes, a gCANS-style phase with variance-proportional shot allocation recovers linear convergence under standard smoothness and Polyak--\L{}ojasiewicz conditions~\cite{Karimi2016PL,Polyak1963}, approaching the fundamental noise floor at a fixed budget.

Empirically, we validate the $\chi^2$ model on a six-qubit TFIM QAOA instance, where Kolmogorov--Smirnov tests show good agreement with both central and non-central $\chi^2$ distributions (Figure~\ref{fig:validation}). On TFIM chains, SPARTA systematically outperforms gCANS in mean and best-case energy while using fewer iterations (Table~\ref{tab:tfim_results}, Figure~\ref{fig:tfim_multirun_comparison}). On a synthetic Lie-inspired barren plateau landscape, the observed plateau exit statistics are consistent with the geometric predictions of Theorem~\ref{thm:geom} (Figure~\ref{fig:lie_barren_convergence}), supporting the regime-switching interpretation.

\subsection{Relation to Lie-Algebraic Trainability}

The Lie-algebraic theory of Ragone \emph{et al.}~\cite{Ragone2024LieBP} expresses gradient variance in terms of the dynamical Lie algebra and associated purities, clarifying how expressiveness, entanglement, and observable non-locality induce barren plateaus. SPARTA translates these population-level statements into finite-sample tests. Proposition~\ref{prop:noncentrality} shows that when the variance of the cost scales as $\Var_\theta[f]\in O(b^{-n})$, the expected non-centrality of the whitened statistic obeys the same scaling, so any sequential test with constant power must consume $\Omega(b^n)$ samples. This matches earlier hardness results~\cite{McClean2018BarrenPlateaus,Arrasmith2022EquivalenceBP} and reveals that SPARTA is optimal up to constants within this regime.

At the same time, Proposition~\ref{prop:lie-allocation} shows how DLA information can be exploited without corrupting calibration. Commutator-based proxies $V_i=\|[H_i,O]\|_F^2$ concentrate exploration shots on parameters where signal is expected to live, but the whitening transformation uses the same allocation in its scaling, so the plateau hypothesis still yields a central $\chi^2_d$ law. The method is agnostic to whether the plateau arises from expressiveness, entanglement, or cost-function structure~\cite{Cerezo2021CostFunction,Wang2021NIBP,OrtizMarrero2021EntanglementBP,Holmes2022ExpressibilityBP}: in all cases the resulting vanishing gradients are detected through the same statistic.

\subsection{Limitations and Outlook}

Several assumptions constrain SPARTA’s current form. The shot-noise model assumes approximately independent gradient coordinates with variance scaling like $\sigma_i^2/B_i$, which holds exactly for independent parameter-shift estimators~\cite{Schuld2019ParamShift,Mitarai2018QCL} but may require corrections for correlated or more exotic schemes. The exploitation phase assumes local Lipschitz gradients and PL-type conditions~\cite{Karimi2016PL,Polyak1963}; strong curvature variation may call for adaptive learning rates or metric-aware updates~\cite{Stokes2020QNG,Koczor2022GeneralizedQNG}. The pilot phase for estimating noise scales and Lie proxies introduces overhead that becomes significant for very small budgets or very high-dimensional problems, motivating amortized or incremental estimation strategies.

From a methodological perspective, several extensions appear promising. Sequential estimation of the non-centrality parameter could allow the thresholds of the regime test to be adapted on the fly while preserving Type~I control, reducing sample complexity when gradients are clearly non-zero. Incorporating approximate curvature information into the regime test could help distinguish flat-but-curved regions from truly barren plateaus and connect more directly with quantum natural gradient and metric-aware methods~\cite{VanStraatenKoczor2021MetricAware}. Finally, integrating SPARTA with ansatz design and cost engineering, for instance by using DLA diagnostics to modify the circuit structure online~\cite{Larocca2022Diagnosing,Cerezo2021CostFunction}, could yield hybrid schemes that both avoid and escape plateaus.

Overall, SPARTA illustrates that ideas from sequential hypothesis testing and Lie-algebraic circuit analysis can be combined to obtain optimization procedures with explicit finite-sample guarantees for near-term quantum devices. We hope this perspective will encourage further work at the intersection of quantum algorithms, statistics, and control theory.

\clearpage
\onecolumn
\appendix

\section{Implementation notes}

\paragraph{Welch vs.\ normal.} For small batches in PTR, use Welch's one-sided $t$ bound; for larger batches, a normal approximation is adequate.

\paragraph{Lie proxies.} When commutator norms are cheap (e.g., Pauli strings), use $V_i=\|[H_i,O]\|_F^2$. Otherwise, approximate with diagonal QFI from short sampling or with locality-based surrogates.
\subsection{Scaling Comparison}

\begin{figure*}[t]
\centering
\includegraphics[width=0.98\textwidth]{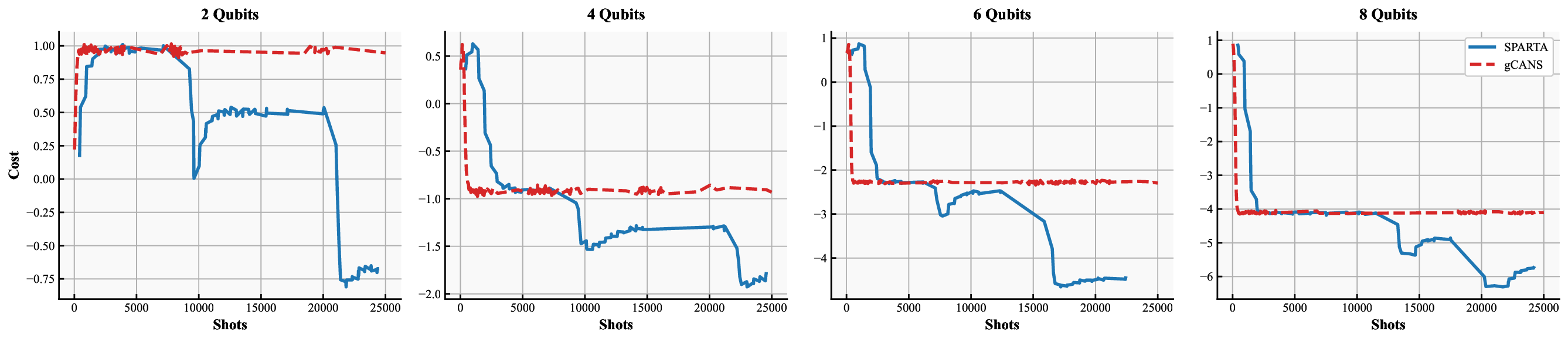}
\caption{Scaling comparison of SPARTA vs.\ gCANS on TFIM chains. Energy trajectories are shown as a function of cumulative shots for systems of 2, 4, 6, and 8 qubits. SPARTA (blue solid) employs sequential hypothesis testing to detect plateau regimes and switches between trust-region exploration and variance-adaptive exploitation. gCANS (red dashed) uses pure exploitation with variance-proportional shot allocation. SPARTA achieves superior convergence on all scales through regime-aware optimization, reaching lower final costs with fewer iterations.}
\label{fig:tfim_scaling}
\end{figure*}

Figure~\ref{fig:tfim_scaling} compares the shot efficiency of SPARTA and gCANS across TFIM chains of increasing size.

\twocolumn
\bibliographystyle{quantum}
\bibliography{refs}

\end{document}